\title{\textbf{Remarks on Legal Entanglement: No-Signaling, Local Operations, and Legal Updates}}
\author{%
  Mikołaj Sienicki\thanks{Polish--Japanese Academy of Information Technology, ul.~Koszykowa~86, 02--008 Warsaw, Poland, European Union}%
  \quad and \quad
  Krzysztof Sienicki\thanks{Chair of Theoretical Physics of Naturally Intelligent Systems (\(\mathbb{NIS}\)\textsuperscript{\textcopyright}), Lipowa~2/Topolowa~19, 05--807 Podkowa Leśna, Poland, European Union; E-mail: \texttt{niskrissienicki@gmail.com}}%
}
\date{\today}
\newtheorem{lemma}{Lemma}
\begin{document}
\maketitle

\begin{abstract}
Godfrey \& Sichelman propose a quantum-inspired formalism for modeling coupled legal relations and interpretations under the label \emph{legal entanglement} \cite{GodfreySichelman2025LegalEntanglement}. The idea is attractive: it offers a compact way to talk about how legal variables can become tightly coupled, and it suggests quantitative proxies for legal modularity and ``information cost.''  

The main technical problem is narrower. In the paper’s discussion of \emph{formulative entanglement}, legislation is treated as a local operation on subsystem $A$ that can change the reduced state of a distant entangled subsystem $B$ (so $\rho_B' \neq \rho_B$) even before any ``measurement'' at $B$, and this is framed as a departure from the quantum no-signaling constraint. In standard quantum mechanics, however, no-signaling does not rely on unitarity alone: it holds for \emph{all} local, trace-preserving operations (CPTP maps) \cite{NielsenChuang2010,Kraus1983,Wilde2017}.  

This note states the correct no-signaling result, pinpoints exactly where the mapping to quantum mechanics becomes inconsistent, and suggests a clean repair that preserves the legal intuition: treat legislation either as (i) a \emph{global} update of the rule/constraint structure (changing the admissible state space or observables), and/or (ii) an LOCC-style process (local operations plus public dissemination of authoritative classical information). Finally, we point out that an ``updating-first'' perspective developed in \emph{Algorithmic Idealism} \cite{Sienicki2024AI1,Sienicki2024AI2,Sienicki2025AI3} and its legal extension \cite{SienickiSienicki2025AILawI,SienickiSienicki2025AILawII} offers a natural vocabulary for keeping two layers distinct: (a) physical locality/no-signaling constraints, and (b) institutional or semantic constraint propagation in legal systems.
\end{abstract}

\section{Target and scope}
These remarks address a specific technical point in Godfrey \& Sichelman's \emph{Legal Entanglement} \cite{GodfreySichelman2025LegalEntanglement}: the treatment of \emph{formulative entanglement} and its relationship to the quantum no-signaling constraint.

The paper introduces three contexts for legal entanglement (interpretive, formulative, adjudicative) and uses standard quantum-information tools (tensor products, partial trace, entanglement measures) to represent dependencies among legal variables. As a \emph{quantum-inspired} modeling strategy, much of the formal apparatus can be read as defining and exploring a family of probabilistic/correlational models on legal variables.

The issue arises when the paper draws a \emph{strong} contrast with quantum mechanics: it suggests that legislation, modeled as an operator on one subsystem, can alter the reduced state of a remote entangled subsystem even absent measurement, and that this is a principled divergence from quantum no-signaling. The remainder of this note argues that the divergence is not required, and that (with a small refactoring) the legal intuitions can be kept while the quantum claims become fully standard.

\section{No-signaling for local operations: the QM baseline}
The no-signaling constraint in quantum theory is operational: choices made on $A$ cannot change the \emph{unconditional} outcome statistics available to an operationally separated party $B$. This statement does not require relativity; in relativistic settings the same operational separation is naturally realized by spacelike separation \cite{NielsenChuang2010,Wilde2017}.

No-signaling is often motivated using local \emph{unitary} evolution, but the correct statement is broader. In quantum information theory, the most general physically allowed local operation is a completely positive trace-preserving (CPTP) map (which includes open-system dynamics and \emph{unselective} measurements) \cite{Kraus1983,NielsenChuang2010,Wilde2017}.

\begin{lemma}[No-signaling under local CPTP maps]
Let $\rho_{AB}$ be any bipartite density operator on $\mathcal{H}_A \otimes \mathcal{H}_B$.
Let $\mathcal{E}_A$ be any CPTP map acting on subsystem $A$.
Define
\begin{align}
\rho_{AB}' &:= (\mathcal{E}_A \otimes \mathrm{Id}_B)(\rho_{AB}), \label{eq:rhoABprime_def}\\
\rho_B &:= \mathrm{Tr}_A(\rho_{AB}), \qquad
\rho_B' := \mathrm{Tr}_A(\rho_{AB}'). \label{eq:rhoB_def}
\end{align}
Then $\rho_B' = \rho_B$.
\end{lemma}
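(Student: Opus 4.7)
The plan is to use the Kraus (operator-sum) representation of $\mathcal{E}_A$ and then exploit trace preservation after taking the partial trace. Because $\mathcal{E}_A$ is CPTP on $\mathcal{H}_A$, there exist operators $\{K_k\}_{k}$ on $\mathcal{H}_A$ with $\sum_k K_k^\dagger K_k = I_A$ such that $\mathcal{E}_A(\sigma) = \sum_k K_k\,\sigma\,K_k^\dagger$ for every operator $\sigma$. First I would lift this representation to the composite system by tensoring with $I_B$, so that
\[
\rho_{AB}' \;=\; \sum_k (K_k \otimes I_B)\,\rho_{AB}\,(K_k^\dagger \otimes I_B).
\]

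Next I would apply $\mathrm{Tr}_A$ term by term. The crucial step is the ``$A$-cyclicity'' of the partial trace: for any operators $M,N$ on $\mathcal{H}_A$ and any $X_{AB}$ on $\mathcal{H}_A\otimes\mathcal{H}_B$, one has $\mathrm{Tr}_A\bigl[(M\otimes I_B)\,X_{AB}\,(N\otimes I_B)\bigr] = \mathrm{Tr}_A\bigl[(NM\otimes I_B)\,X_{AB}\bigr]$. Applying this with $M=K_k$ and $N=K_k^\dagger$, and interchanging the finite sum with $\mathrm{Tr}_A$ (linearity), gives
\[
\rho_B' \;=\; \sum_k \mathrm{Tr}_A\bigl[(K_k^\dagger K_k \otimes I_B)\,\rho_{AB}\bigr] \;=\; \mathrm{Tr}_A\Bigl[\Bigl(\textstyle\sum_k K_k^\dagger K_k\Bigr)\otimes I_B \cdot \rho_{AB}\Bigr].
\]
Trace preservation $\sum_k K_k^\dagger K_k = I_A$ then collapses the bracketed sum to $I_A\otimes I_B$, and the right-hand side becomes $\mathrm{Tr}_A(\rho_{AB})=\rho_B$, which is precisely the asserted equality.

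There is no genuine obstacle here; the conceptual content is simply that trace preservation of $\mathcal{E}_A$ is \emph{exactly} what renders the local action invisible in the $B$-marginal. The only care needed is to justify the $A$-cyclicity identity, which follows from the definition of $\mathrm{Tr}_A$ via an orthonormal basis $\{|a\rangle\}$ of $\mathcal{H}_A$ and the scalar cyclicity of the numerical trace on matrix elements $\langle a|\cdot|a\rangle$. A basis-free alternative would dilate $\mathcal{E}_A$ to a Stinespring isometry $V:\mathcal{H}_A\to\mathcal{H}_A\otimes\mathcal{H}_E$, apply $V\otimes I_B$ to $\rho_{AB}$, and trace out both $A$ and the environment $E$; isometricity $V^\dagger V=I_A$ together with cyclicity of the full trace again yields $\rho_B'=\rho_B$. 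Either route makes precise the paper's framing that no-signaling is a consequence of trace preservation alone, not of unitarity.
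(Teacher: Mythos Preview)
Your argument is correct and follows essentially the same route as the paper: Kraus decomposition, the $A$-cyclicity identity for the partial trace, and then the completeness relation $\sum_k K_k^\dagger K_k = I_A$. The only cosmetic difference is that the paper moves the Kraus factor to the right of $\rho_{AB}$ rather than the left, and your added Stinespring remark is a harmless bonus.
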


\begin{proof}
Write $\mathcal{E}_A$ in Kraus form \cite{Kraus1983,NielsenChuang2010}:
\begin{equation}\label{eq:kraus_form}
\mathcal{E}_A(X)=\sum_k A_k X A_k^\dagger, \qquad \sum_k A_k^\dagger A_k = I_A.
\end{equation}
Then
\begin{equation}\label{eq:rhoABprime_kraus}
\rho_{AB}' = \sum_k (A_k \otimes I_B)\,\rho_{AB}\,(A_k^\dagger \otimes I_B).
\end{equation}
Taking the partial trace over $A$ and using the basic identity
\begin{equation}\label{eq:partialtrace_cyclicity}
\mathrm{Tr}_A\!\left[(X_A\otimes I_B)Y_{AB}\right]
=
\mathrm{Tr}_A\!\left[Y_{AB}(X_A\otimes I_B)\right],
\end{equation}
we obtain
\begin{align}
\rho_B'
&= \mathrm{Tr}_A(\rho_{AB}') \nonumber\\
&= \sum_k \mathrm{Tr}_A\!\Big((A_k \otimes I_B)\,\rho_{AB}\,(A_k^\dagger \otimes I_B)\Big) \label{eq:trace_step1}\\
&= \sum_k \mathrm{Tr}_A\!\Big(\rho_{AB}\,(A_k^\dagger A_k \otimes I_B)\Big) \label{eq:trace_step2}\\
&= \mathrm{Tr}_A\!\Big(\rho_{AB}\,(\sum_k A_k^\dagger A_k \otimes I_B)\Big) \label{eq:trace_step3}\\
&= \mathrm{Tr}_A(\rho_{AB}\,(I_A \otimes I_B)) \label{eq:trace_step4}\\
&= \mathrm{Tr}_A(\rho_{AB})
= \rho_B. \label{eq:nosignaling_result}
\end{align}
\end{proof}

\subsection{Selective vs.\ unselective measurements (where ``remote updating'' belongs)}
A useful place where “remote updates’’ \emph{do} appear is when we condition on a measurement outcome. The key point is that there are two different operations \cite{NielsenChuang2010,Wilde2017,PreskillLectureNotes}:

\begin{itemize}[leftmargin=2em]
\item Unselective measurement: we measure but ignore the outcome. This is still a CPTP map, so it cannot change $\rho_B$ (Lemma~1 / \eqref{eq:nosignaling_result}).
\item Selective measurement: we \emph{condition} on a particular outcome $k$. This produces a conditional state $\rho_{B|k}$ that depends on $k$.
\end{itemize}

Concretely, let $\{M_k\}_k$ be Kraus operators on $A$ satisfying the completeness condition
\begin{equation}\label{eq:mk_completeness}
\sum_k M_k^\dagger M_k = I_A,
\end{equation}
so that the corresponding unselective measurement is trace-preserving on $A$. Define the outcome probability (full trace over $AB$)
\begin{equation}\label{eq:pk_def}
p_k=\mathrm{Tr}_{AB}\!\left[(M_k\otimes I_B)\,\rho_{AB}\,(M_k^\dagger\otimes I_B)\right],
\end{equation}
and the normalized conditional state
\begin{equation}\label{eq:conditional_state}
\rho_{B|k}=\frac{1}{p_k}\,\mathrm{Tr}_A\!\Big[(M_k\otimes I_B)\,\rho_{AB}\,(M_k^\dagger\otimes I_B)\Big].
\end{equation}
The crucial operational point is simple: $B$ can only \emph{use} this dependence on $k$ if the outcome label is communicated by classical means.
Otherwise, $B$ still sees the same unconditional marginal.
 
\section{Where the target paper's mapping breaks}
Godfrey \& Sichelman propose that, unlike standard quantum systems, a legislative enactment modeled as an operator on subsystem $A$ may alter the reduced state of an entangled subsystem $B$ prior to any legal ``measurement'' at $B$, and they describe this as a divergence from no-signaling \cite{GodfreySichelman2025LegalEntanglement}.

From the perspective of Lemma~1 (equations \eqref{eq:rhoABprime_def}--\eqref{eq:nosignaling_result}), there are only a few coherent ways to interpret such a claim:
\begin{enumerate}[leftmargin=2em]
\item $P_A$ is not trace-preserving (i.e.\ it is post-selection / conditioning).
Then $B$ can change conditionally as in \eqref{eq:conditional_state}, but without communication, $B$ still sees the same unconditional marginal \eqref{eq:nosignaling_result}.
\item $P_A$ is not a local operation on a fixed bipartite state.
For example, it may represent a \emph{global} update to the joint state assignment or to the permissible observables/constraints.
\item The model includes an explicit external communication channel.
If one posits a channel that propagates a signal ``outside physical space,'' then one has in effect posited communication (albeit in a different ontology).
\end{enumerate}

The simplest repair is to stop treating legislation as a local quantum operation on subsystem $A$ and instead treat it as a \emph{global rule update} or \emph{conditioning + dissemination} process.

\section{A QM-consistent reframing of ``formulative entanglement''}

\subsection{Option A: legislation as a global rule/constraint update}
In law, legislation changes the \emph{normative constraints} governing which states are admissible and which questions are legally meaningful. In a quantum-inspired model, this is most naturally represented not as a local map
\begin{equation}\label{eq:local_map_form}
\rho_{AB} \mapsto (\mathcal{E}_A \otimes \mathrm{Id}_B)(\rho_{AB}),
\end{equation}
but as a transformation of the \emph{model itself}, e.g.
\begin{equation}\label{eq:global_model_update}
(\rho_{AB}, \mathcal{O}, C) \mapsto (\rho_{AB}', \mathcal{O}', C'),
\end{equation}
where $\mathcal{O}$ encodes the set of legal ``observables'' (the partitions of outcomes a decision-maker can realize) and $C$ encodes intertextual and doctrinal constraints.

\subsection{Option B: legislation as LOCC-style conditioning + public dissemination}
A complementary reframing treats legal updating as an explicitly informational process: a legal act produces a public record, and legal actors update their state assignments upon learning the record.

In quantum information terms, this is structurally analogous to LOCC, i.e.\ \emph{Local Operations and Classical Communication}. LOCC is best thought of as a \emph{class of protocols}: possibly multi-round and adaptive sequences of local CPTP maps whose choice can depend on previously communicated classical outcomes, together with the dissemination of those outcome records/labels via classical channels \cite{NielsenChuang2010,Wilde2017,PreskillLectureNotes}. This is the standard quantum-information template for ``update at a distance without signaling.''

\section{Comment and related work: Algorithmic Idealism and Algorithmic Idealism of Law}
Two related ``updating-first'' strands help keep the layers separated.

First, the \emph{Algorithmic Idealism} sequence \cite{Sienicki2024AI1,Sienicki2024AI2,Sienicki2025AI3} treats ``collapse'' as informational updating of an agent’s state assignment and treats entanglement as a structured coupling constraint on joint predictions. On that stance, instantaneous change is naturally interpreted as a change in conditional distributions (given new information), not as an operational signaling channel.

Second, the \emph{Algorithmic Idealism of Law} short versions \cite{SienickiSienicki2025AILawI,SienickiSienicki2025AILawII} provide a directly legal instantiation of the same separation:
\begin{itemize}[leftmargin=2em]
\item In the single-system U.S.\ framing, authoritative acts (judgments, enactments, settlements) are modeled as ``legal measurements'' that trigger Bayesian updating over a predictive kernel on \emph{legal self-states} \cite{SienickiSienicki2025AILawI}. This matches precisely the conditional/epistemic reading of remote ``updates'' suggested by \eqref{eq:conditional_state}.
\item In the European multi-level framing, legally authoritative updates occur at multiple levels (EU/CJEU, ECHR/ECtHR\footnote{EU = European Union; CJEU = Court of Justice of the European Union; ECHR = European Convention on Human Rights; ECtHR = European Court of Human Rights.}, national orders), and coupling is represented via interface constraints between kernels \cite{SienickiSienicki2025AILawII}. This makes it especially clear why ``instantaneous'' legal effects should be modeled as constraint propagation in a layered institutional architecture rather than as a local physical operation that changes a distant reduced state.
\end{itemize}
Incorporating these references clarifies that the repaired account of ``formulative entanglement'' is not an ad hoc patch: it is the expected form of update in an informational jurisprudence.

\section{Conclusion}
The target paper's conceptual contribution---treating certain legal dependencies as entanglement-like and using quantum-information measures as quantitative diagnostics---does not require any departure from standard quantum mechanics \cite{GodfreySichelman2025LegalEntanglement}. The appearance of a no-signaling violation stems from mapping legislative change to a local subsystem operation on a fixed bipartite state. Recasting legislation as a global rule update \eqref{eq:global_model_update} and/or as LOCC-like conditioning plus dissemination \eqref{eq:conditional_state} yields a quantum-mechanically faithful account while preserving the legal phenomena the paper seeks to capture, and it aligns naturally with the algorithmic-state jurisprudence program \cite{SienickiSienicki2025AILawI,SienickiSienicki2025AILawII}.

\bibliographystyle{plain}
\bibliography{refs}

\begin{thebibliography}{10}

\bibitem{GodfreySichelman2025LegalEntanglement}
Nicholas Godfrey and Ted Sichelman.
\newblock Legal entanglement, 2025.
\newblock arXiv:2511.03982.

\bibitem{Kraus1983}
Karl Kraus.
\newblock {\em States, Effects, and Operations: Fundamental Notions of Quantum Theory}, volume 190 of {\em Lecture Notes in Physics}.
\newblock Springer, Berlin, 1983.

\bibitem{NielsenChuang2010}
Michael~A. Nielsen and Isaac~L. Chuang.
\newblock {\em Quantum Computation and Quantum Information}.
\newblock Cambridge University Press, Cambridge, 10th anniversary edition edition, 2010.

\bibitem{PreskillLectureNotes}
John Preskill.
\newblock Lecture notes for physics 229: Quantum information and computation.
\newblock \url{http://theory.caltech.edu/~preskill/ph229/}, 1998.
\newblock Accessed 2025-12-15.

\bibitem{Sienicki2024AI1}
Krzysztof Sienicki.
\newblock Algorithmic idealism i: Reconceptualizing reality through information and experience, 2024.
\newblock arXiv:2412.20485.

\bibitem{Sienicki2024AI2}
Krzysztof Sienicki.
\newblock Algorithmic idealism ii: Reassessment of competing theories, 2025.
\newblock arXiv:2501.00022.

\bibitem{Sienicki2025AI3}
Krzysztof Sienicki.
\newblock Algorithmic idealism iii: ``algorithmic state'' formulation of quantum mechanics, 2025.
\newblock arXiv:2502.08653.

\bibitem{SienickiSienicki2025AILawI}
Miko{\l}aj Sienicki and Krzysztof Sienicki.
\newblock Algorithmic idealism of law i: Multi-level united states legal systems (short version), December 2025.
\newblock Short-version preprint, dated 15 Dec 2025.

\bibitem{SienickiSienicki2025AILawII}
Miko{\l}aj Sienicki and Krzysztof Sienicki.
\newblock Algorithmic idealism of law ii: Multi-level european systems (short version), December 2025.
\newblock Short-version preprint, dated 15 Dec 2025.

\bibitem{Wilde2017}
Mark~M. Wilde.
\newblock {\em Quantum Information Theory}.
\newblock Cambridge University Press, Cambridge, 2 edition, 2017.

\end{thebibliography}

\end{document}